\documentclass[letterpaper,10 pt, conference]{ieeeconf}
 \pdfminorversion=4
\IEEEoverridecommandlockouts                               
\overrideIEEEmargins

\usepackage{fancyhdr}
\usepackage{hyperref}
\usepackage{xspace}
\usepackage[font={small}]{caption}
\usepackage{color}
\usepackage{setspace}
\usepackage{enumerate}
\usepackage{graphics}
\usepackage{cite}
\usepackage{dsfont}
\usepackage{latexsym}
\usepackage{amsmath}
\usepackage{amssymb}
\usepackage{amsfonts}
\usepackage{amsthm}
\usepackage{times}
\usepackage{sgame}
\usepackage{color}
\usepackage{verbatim}
\usepackage{xcolor}
\usepackage{pgfplots}
\usepackage{epsfig} 
\usepackage{subcaption}

\usepackage{hyperref}

\usepackage{enumitem}
\newlist{inparaenum}{enumerate}{2}
\setlist[inparaenum]{nosep}
\setlist[inparaenum,1]{label=\bfseries\arabic*.}
\setlist[inparaenum,2]{label=\arabic{inparaenumi}\emph{\alph*})}

\newcommand{\be}{\begin{equation}}
\newcommand{\ee}{\end{equation}}
\newcommand{\mbb}[1]{\mathbb{#1}}

\newcommand{\mcal}[1]{\mathcal{#1}}

\theoremstyle{plain}
\newtheorem{theorem}{Theorem}[section]

\newtheorem{lemma}{Lemma}[section]

\newtheorem{definition}{Definition}

\def\dSP{\delta_{SP}}
\def\dRT{\delta_{RT}}

\makeatletter
\newcommand*{\rom}[1]{\expandafter\@slowromancap\romannumeral #1@}
\makeatother

\normalsize\title{\LARGE \bf
	An Analysis of Logit Learning with the r-Lambert Function
	\thanks{This work was supported in part by 
Colorado State Bill 18-086, NSF grant  \#ECCS-2346791, and the Netherlands Organization for Scientific Research (NWO-Vici-19902). R. Gavin and M. Cao are with the Faculty of Science and Engineering at the University of Groningen. K. Paarporn is with the Department of Computer Science at University of Colorado, Colorado Springs. Contact: \texttt{\{r.c.gavin,m.cao\}@rug.nl}, \texttt{kpaarpor@uccs.edu}. }  }

\author{
	Rory Gavin, Ming Cao, Keith Paarporn
}

\begin{document}

\maketitle

\begin{abstract}
    
    The well-known replicator equation in evolutionary game theory describes how population-level behaviors change over time when individuals make decisions using simple imitation learning rules. In this paper, we study evolutionary dynamics based on a fundamentally different class of learning rules known as logit learning. Numerous previous studies on logit dynamics provide numerical evidence of bifurcations of multiple fixed points for several types of games. Our results here provide a more explicit analysis of the logit fixed points and their stability properties for the entire class of two-strategy population games -- by way of the $r$-Lambert function. 
    We find that for Prisoner's Dilemma and anti-coordination games, there is only a single fixed point for all rationality levels. However, coordination games exhibit a pitchfork bifurcation: there is a single fixed point in a low-rationality regime, and three fixed points in a high-rationality regime. We provide an implicit characterization for the level of rationality where this bifurcation occurs. In all cases, the set of logit fixed points converges to the full set of Nash equilibria in the high rationality limit.
\end{abstract}

\section{Introduction}

Evolutionary game theory provides a wide range of tools to analyze how large populations of agents behave over time. Originating in the fields evolutionary biology and population dynamics \cite{taylor1978evolutionary,smith1982evolution}, these mathematical tools are now widely applicable to societal and engineered systems like traffic networks, distributed control systems, and social behaviors in epidemics \cite{smith1993new,quijano2017role,weitz2016oscillating,frieswijk2022modeling,gong2022limit,leonardos2023catastrophe,satapathi2023coupled}.  Central to an evolutionary game is the \emph{learning rule}, also called a \emph{revision protocol} \cite{sandholm2010population}, whereby individual agents choose their strategies during the game. When all agents adopt a particular learning rule, one can derive the associated mean dynamical equations that describe how the proportion of agents using a particular strategy changes over time.

Considering that evolutionary game theoretic tools are increasingly used to model human population dynamics and large-scale socio-technical systems, it becomes increasingly important to study the impact of different learning rules on the population-level behavior. Perhaps the most well-studied and utilized protocol is the imitative revision protocol used in the replicator equation. Originally formulated to model biological replication and fitness \cite{taylor1978evolutionary}, the replicator equation is the mean dynamic of agents utilizing imitative learning rules, i.e., modeling strategy selection as the imitation of those who are most successful.

Imitators, by definition, do not assess all their options before making a decision -- they blindly mimic those whom they perceive as more successful.  However, in today's world, advanced technology and an almost inexhaustible amount of information are available to most decision-making agents.  The availability of these resources gives the agents the ability to make informed decisions.  As comparing strategies based on cold hard facts is fundamentally different to parroting those who are more successful, modeling modern decision-making necessitates a revision protocol that includes the dynamics of informed decision-making.

A learning rule that is more aligned with these aspects of decision-making is \emph{logit learning}.  Agents using this protocol weigh the payoffs of choosing one strategy against the payoffs of all other strategies. In other words, each agent decides what to do next using information on the advantages and disadvantages of any given course of action.  The logit learning rule specifies the level at which the agents rationally make this choice using a rationality parameter $\beta \geq 0$, where low values represent choosing at random and high values represent choosing payoff-maximizing actions with high probability.

This paper focuses on the analysis of logit learning dynamics in two-strategy population games. Our analysis centers on completely characterizing the locations and stability of all fixed points for all possible two-strategy games, which includes dominant-strategy, anti-coordination, and coordination games.  While these results are well-known for replicator dynamics, they are not systematically characterized for logit dynamics as its fixed points are described by a transcendental equation.  Consequently, many previous studies of logit dynamics primarily highlight their many interesting properties (e.g. bifurcations and limit cycles) via numerical simulations for specific games \cite{tuyls2003selection,hommes2012multiple,bloembergen2015evolutionary}. One exception is a recent work that analyzes general stability properties of the logit dynamics in population games by using contraction and Lyapunov-based tools \cite{cianfanelli2023logit}. Another recent work studied the impact of logit dynamics in feedback-evolving games, where an environment state co-evolves with agent payoffs \cite{paarporn2024madness}.

Our main contributions in this paper establish an explicit connection between the fixed points of the logit dynamics in two-strategy population games and the $r$-Lambert function \cite{mezHo2017generalization}. By leveraging the properties of this function, we are able to more precisely characterize the behavior of logit fixed points for all $\beta\geq 0$ and for all two-strategy normal-form population games. Of note, we find that only the class of coordination games exhibits a bifurcation: from a single logit fixed point for low rationality, to a set of three fixed points for sufficiently high rationality.

To this end, preliminary background on evolutionary games is provided in Section \ref{sec:prelim}.  Then in Section \ref{sec:ProbDesc}, we give the logit revision protocol and the resultant mean logit dynamics and define the problem.  Sections \ref{sec:analysis} and \ref{sec:Simulations} cover the analysis of the fixed points and present simulations confirming these results, and finally, Section \ref{sec:Conclusion} summarizes our results and future research directions.

\section{Preliminaries}\label{sec:prelim}

We consider a two-strategy normal-form population game with unit mass and label the strategies $\mcal{S} = \{1,2\}$. Let $x_1,x_2 \in [0,1]$ denote the fraction of the population that are using strategies 1 and 2, respectively. As $x_1 + x_2 = 1$, we let $x = x_1$, since it immediately follows that $x_2 = 1-x$. We term $x \in [0,1]$ the \emph{population state}. 

The payoff matrix is given by
\begin{equation}
    A = 
    \begin{bmatrix}
        R & S \\ T & P
    \end{bmatrix}
\end{equation}
with the four entries $R,S,T,P \in \mbb{R}$. An agent using strategy 1 experiences a payoff of $R$ when it encounters another agent using strategy 1, and a payoff of $S$ when it encounters an agent using strategy 2. Likewise, an agent using strategy 2 experiences a payoff of $T$ and $P$ when encountering an agent using strategy 1 and 2, respectively.

We concisely represent the payoffs to each strategy as
\begin{equation}
    \begin{aligned}
	   \pi_1(x) &= [A [x,1-x]^\top ]_1 = Rx + S(1-x), \\
        \pi_2(x) &= [A [x,1-x]^\top ]_2 = Tx + P(1-x),
    \end{aligned}
\end{equation}
where $[M]_i$ denotes the $i$th row of matrix $M$.

\begin{definition}
    A population state $x$ is a \emph{Nash equilibrium} if and only if it satisfies one of the following:

    \begin{enumerate}
        \item If $x = 0$, then $\pi_2(0) > \pi_1(0)$;
        \item If $x \in (0,1)$, then $\pi_1(x) = \pi_2(x)$; or
        \item If $x=1$, then $\pi_1(0) > \pi_2(0)$.
    \end{enumerate}
\end{definition}

The collection of all 2-strategy normal form games can be classified into four distinct types. Defining the parameters $\dSP := S - P$ and $\dRT := R - T$, the four types correspond to the four quadrants in the parameter space $(\delta_{SP},\delta_{RT})\in\mathbb{R}^2$.  Throughout this paper, we shall denote these quadrants as
\begin{align*}
\mathcal{Q}_{\text{\rom{1}}} &:= \left\{ (\delta_{SP}, \delta_{RT}) \; \middle| \; (\delta_{SP}, \delta_{RT}) \in \mathbb{R}_0^+ \times \mathbb{R}_0^+\right\} \setminus \left\{ (0,0) \right\}\\
\mathcal{Q}_{\text{\rom{2}}} &:= \left\{ (\delta_{SP}, \delta_{RT}) \; \middle| \; (\delta_{SP}, \delta_{RT}) \in \mathbb{R}^- \times \mathbb{R}^+\right\} \\
\mathcal{Q}_{\text{\rom{3}}} &:= \left\{ (\delta_{SP}, \delta_{RT}) \; \middle| \; (\delta_{SP}, \delta_{RT}) \in \mathbb{R}_0^- \times \mathbb{R}_0^-\right\} \setminus \left\{ (0,0) \right\} \\
\mathcal{Q}_{\text{\rom{4}}} &:= \left\{ (\delta_{SP}, \delta_{RT}) \; \middle| \; (\delta_{SP}, \delta_{RT}) \in \mathbb{R}^+ \times \mathbb{R}^-\right\},
\end{align*}

where $\mathbb{R}^+$ is the set of all positive real numbers, $\mathbb{R}^-$ is the set of all negative real numbers, and the subscript $0$ in $\mathbb{R}_0^+$ and $\mathbb{R}_0^-$ denote $\mathbb{R}^+ \cup \{0\}$ and $\mathbb{R}^- \cup \{0\}$, respectively.

Quadrant \rom{1} corresponds to games where strategy 1 is dominant whose unique Nash equilibrium is $x=1$. Quadrant \rom{2} describes coordination games which have three Nash equilibria at $x = 0$, $x = \frac{\dSP}{\dSP - \dRT}$, and $x = 1$. Quadrant \rom{3} describes the Prisoner's Dilemma which has a unique Nash equilibrium $x=0$. Quadrant \rom{4} describes anti-coordination games whose unique Nash equilibrium is $x = \frac{\dSP}{\dSP - \dRT}$.

\subsection{The replicator equation}

Agents in the population must dynamically revise their strategy choices over time. They do so using a \emph{revision protocol}, which models how they decide whether or not to switch from their current strategy.  A revision protocol is given by the collection of functions
\begin{equation}
    \rho_{ij}(x), \ \forall i,j \in \mcal{S}
\end{equation}
where $\rho_{ij}(x) \geq 0$ quantifies the rate at which $i$-strategists switch to strategy $j$. Specification of the revision protocol allows one to express the \emph{mean dynamics},
\begin{equation}\label{eq:mean_dyn}
    \dot{x} = (1-x) \rho_{21}(x) - x \rho_{12}(x)
\end{equation}
which describes the rate at which the fraction of 1-strategists changes in the population. In many evolutionary game analyses, the agents follow an imitative revision protocol,
\begin{equation}
    \rho_{ij}(x) = x_j[\pi_j(x) - \pi_i(x)]_+
\end{equation}
where $[a]_+ = \max\{a,0\}$. Here, an $i$-strategist encounters a $j$-strategist in the population with probability $x_j$, and will switch to $j$ at a rate proportional to the magnitude of the difference in their payoffs. The imitative protocol induces the mean dynamics \eqref{eq:mean_dyn}
\begin{equation}\label{eq:replicator}
    \dot{x} = x(1-x)g(x)
\end{equation}
where
\begin{equation}\label{eq:g}
    g(x) := \pi_1(x) - \pi_2(x) = \dRT x + \dSP (1-x)
\end{equation}
is the payoff difference between strategy 1 and 2. The equation \eqref{eq:replicator} is called the \emph{replicator equation}.

\section{Problem Description}\label{sec:ProbDesc}
In this paper, however, we focus on a different revision protocol based on a boundedly-rational learning rule known as the logit protocol:
\begin{equation}\label{eq:logit_protocol}
    \rho_{ij}(x) = \rho_{j}(x) = \frac{\exp(\beta \pi_j(x))}{\sum_{k\in\mcal{S}} \exp(\beta \pi_k(x))}.
\end{equation}
where $\beta \geq 0$ is the \emph{rationality level}. 

The logit protocol is fundamentally different from the imitation protocol. In \eqref{eq:logit_protocol}, the agent weighs all possible strategies with more weight over ones that give higher payoffs -- consistent with informed or technologically-assisted decision-making. The degree to which higher-payoff strategies are favored depends on the rationality level. For $\beta = 0$, agents just choose any strategy uniformly at random, and for large $\beta$, the logit protocol reflects a best-response decision.

The expression for the mean dynamics \eqref{eq:mean_dyn} induced by the logit protocol is given by
\begin{equation}\label{eq:logit_dyn}
    \dot{x} = f(x)  = \frac{e^{\beta g(x)}}{1+e^{\beta g(x)}} - x
\end{equation}
and is referred to as the \emph{logit dynamics}.

In the next section, we detail the connection between the $r$-Lambert function and the fixed points of \eqref{eq:logit_dyn}. 

\section{Analysis}\label{sec:analysis}

One can easily locate the fixed points for each game for large $\beta$ by evaluating the limit of $f(x^*)=0$ as $\beta$ tends to infinity.  Hence, one arrives at the Theorem \ref{thm:AsympOfFPs}.

\begin{theorem}[Fixed Point Values as $\beta \to \infty$]\label{thm:AsympOfFPs}
The fixed points of \eqref{eq:logit_dyn} converge to the Nash equilibria for all $(\delta_{SP},\delta_{RT})\in\mathbb{R}^2\setminus \left\{(\delta_{SP},\delta_{RT}) \; \middle| \; \delta_{RT}=\delta_{SP} \right\}$ as $\beta$ tends to infinity.
\end{theorem}
\begin{proof}
See Appendix \ref{ap:asympFP}
\end{proof}

Theorem \ref{thm:AsympOfFPs} immediately reveals an interesting property without having to solve the transcental equation $f(x^*)=0$.  For two-strategy games using logit learning, the set of fixed points, $\mathcal{S}^*$, defined \[
\mathcal{S}^* := \left\{ x\in[0,1] \; \middle| \; f(x) = 0 \right\},
\] for any particular game converges to the full set of Nash equilibria.  In the context of two-strategy population games, this provides a strengthened version of the results from \cite{cianfanelli2023logit} which state that the set of fixed points for an arbitrary $n$-strategy game must be a subset of the Nash equilibria.

This method, however, is limited to infinite values of $\beta$.  For an analysis of finite $\beta \geq 0$, one must solve the transcendental equation $f(x^*)=0$.  To do so, however, one requires the $r$-Lambert Function \cite{mezHo2017generalization}.

\subsection{Fixed Points Location in Terms of $r$-Lambert Function }\label{sec:r-LamForm}
\par The $r$-Lambert function is the solution $y=W_r(z)$ to equations of the form $ye^y + ry = z$ for some $r,z\in\mathbb{R}$ \cite{mezHo2017generalization}.  Define $k:= -\beta m$, $m := \delta_{RT} - \delta_{SP}$, and $r:=e^{\beta \delta_{SP}}$.  At a fixed point $x^*$ of \eqref{eq:logit_dyn}, $f(x^*)=0$ is given in terms of $k$, $m$, and $r$ by
\begin{equation*}
   kr = kx^*e^{kx^*} + rkx^*.
\end{equation*}

This expression satisfies the form necessary for a $r$-Lambert solution. Therefore, \eqref{eq:logit_dyn_sol} is the explicit expression for the fixed points of \eqref{eq:logit_dyn} when $k\neq 0$. 
\begin{equation}\label{eq:logit_dyn_sol}
x^* = \frac{1}{k}W_r(kr)
\end{equation}

 If $k=0$, i.e., if $\beta=0$ or $\delta_{RT} = \delta_{SP}$, an algebraic expression exists for sole the fixed point, given by
\begin{equation}\label{eq:logit_dyn_sol_k_eq_0}
    x^* = \frac{r}{1+r}.
\end{equation}

As such, one acquires an explicit expression for the fixed points $x^*$ for any level of rationality $\beta$ and any game $(\delta_{SP},\delta_{RT})$.

\subsection{Quantity of Fixed Points}\label{sec:noOfFPs}
Though $x^*$ has a single algebraic solution whenever $k=0$, $x^*$ has no straightforward algebraic solution whenever $k \neq 0$.  Despite this, one can deduce the quantity of solutions and their stability using the properties of the $r$-Lambert function.

For $y=W_r(z)$, the number of solutions to the $r$-Lambert function depends upon the values of $r$ and $z$ \cite[Theorem 4]{mezHo2017generalization}.  Therefore, the number of solutions to \eqref{eq:logit_dyn_sol} depends on $r$ and $kr$, which in turn depend on the agents' level of rationality and the type of game.  Examining $r$ and $kr$ over all $\beta$ and $(\delta_{SP},\delta_{RT})$, one finds the number of fixed points per parameter regime, summarized in Theorem \ref{thm:NoOfFPs}. 
\begin{theorem}[The Number of Fixed Points]\label{thm:NoOfFPs}
\par Let $|\mathcal{X}|$ denote the cardinality of set $\mathcal{X}$.  The number of fixed points depends on the quadrant in which the game $(\delta_{SP},\delta_{RT})$ resides.
\begin{itemize}
\item If $(\delta_{SP},\delta_{RT}) \in \mathcal{Q}_{\text{\rom{1}}} \cup \mathcal{Q}_{\text{\rom{3}}} \cup \mathcal{Q}_{\text{\rom{4}}}$, then $\lvert \mathcal{S}^* \rvert = 1$ for all $\beta \geq 0 $;
\item If $(\delta_{SP},\delta_{RT}) \in \mathcal{Q}_{\text{\rom{2}}}$ and $\delta_{RT} \neq -\delta_{SP}$, then for all $\beta \geq 0$,
\begin{itemize}
\item $\lvert \mathcal{S}^* \rvert = 1$ for all $\beta < \beta_r$;
\item $\lvert \mathcal{S}^* \rvert = 2$ for all $\beta = \beta_r$; and
\item $\lvert \mathcal{S}^* \rvert = 3$ for all $\beta > \beta_r$;
\end{itemize}
where $\beta_r$ is defined implicitly for  $\delta_{RT} > -\delta_{SP}$ as \[
W_{0}(-re) = 1 - \frac{\beta_{r} m}{2} + \frac{\sqrt{\beta_{r} m\left(\beta_{r} m - 4\right)}}{2},
\] or for $\delta_{RT} < -\delta_{SP}$ as \[
W_{-1}(-re) = 1 - \frac{\beta_{r} m}{2} - \frac{\sqrt{\beta_{r} m\left(\beta_{r} m - 4\right)}}{2};
\] and $W_0(z)$ and $W_{-1}(z)$ are the 0th and --1st branches of the Lambert $W$ function.
\item If $(\delta_{SP},\delta_{RT}) \in \mathcal{Q}_{\text{\rom{2}}}$ and $\delta_{RT} = -\delta_{SP}$, then for all $\beta \geq 0$,
\begin{itemize}
\item $\lvert \mathcal{S}^* \rvert = 1$ for $\beta \leq -\frac{2}{\delta_{SP}}$; and
\item $\lvert \mathcal{S}^* \rvert = 3$ for $\beta > -\frac{2}{\delta_{SP}}$.
\end{itemize}
\end{itemize}
\end{theorem}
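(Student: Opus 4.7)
The plan is to combine the explicit fixed-point formula $x^\ast = \tfrac{1}{k} W_r(kr)$ from Subsection \ref{sec:r-LamForm} (with the closed-form $x^\ast = r/(1+r)$ when $k = 0$) with the classification of real branches of the $r$-Lambert function in \cite[Theorem 4]{mezHo2017generalization}. Because $r = e^{\beta \delta_{SP}} > 0$, the map $\psi(y) := y e^y + r y$ has derivative $\psi'(y) = (y+1)e^y + r$, which vanishes precisely when $r \leq 1/e^2$; in that regime $\psi$ has one local maximum and one local minimum, and the preimage count of any $z = kr$ is $1$, $2$, or $3$ depending on where $z$ sits relative to those two critical values.

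First I would dispatch $\mathcal{Q}_{\text{\rom{1}}}$ and $\mathcal{Q}_{\text{\rom{4}}}$: each enforces $\delta_{SP} \geq 0$, so $r \geq 1 > 1/e^2$ and $\psi$ is strictly monotone, giving a single branch of $W_r$ and hence $|\mathcal{S}^\ast| = 1$ for every $\beta \geq 0$. For $\mathcal{Q}_{\text{\rom{3}}}$, $r \leq 1/e^2$ does become possible for large $\beta$, so I would verify that the extra $W_r$ branches correspond to $y$-values lying outside $[\min(0,k),\max(0,k)]$ (so the candidate $x^\ast = y/k$ leaves $[0,1]$); the cleanest way is to observe that in Q3 the payoff gap $g(x)$ is non-positive on $[0,1]$, so the sigmoid $h(x) := e^{\beta g(x)}/(1+e^{\beta g(x)})$ is confined to $[0,1/2]$, and to show directly that the curve $y = h(x)$ crosses the diagonal exactly once.

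The substantive content lies in $\mathcal{Q}_{\text{\rom{2}}}$. Here the fixed-point count can change only as $kr$ crosses a critical value of $\psi$. At a critical point $y_c$, the condition $(y_c+1)e^{y_c} = -r$ yields $e^{y_c} = -r/(y_c+1)$ and $\psi(y_c) = r y_c^2/(y_c+1)$. Setting $\psi(y_c) = kr$ reduces to the quadratic $y_c^2 - k y_c - k = 0$, whose roots $y_c = \tfrac{1}{2}(k \pm \sqrt{k^2 + 4k})$, together with $y_c + 1 = W(-re)$ (principal branch $W_0$ for $y_c > -1$, lower branch $W_{-1}$ for $y_c < -1$) and the substitutions $k = -\beta_r m$ and $k^2 + 4k = \beta_r m(\beta_r m - 4)$, reproduce the two implicit expressions in the theorem; the correct branch is selected by the sign of $\delta_{RT} + \delta_{SP}$, which orders the two critical values of $\psi$. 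The symmetric case $\delta_{RT} = -\delta_{SP}$ is then a pitchfork: $x^\ast = 1/2$ is always a fixed point by symmetry, and the tangency condition at $x^\ast = 1/2$ reduces to $\beta m / 4 = 1$; using $m = -2\delta_{SP}$ gives the explicit threshold $\beta_r = -2/\delta_{SP}$.

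The main obstacle will be the Q3 step: the $r$-Lambert branch count alone does not yield uniqueness when $r < 1/e^2$, so one must certify that the ``extra'' Lambert solutions land outside $[\min(0,k),\max(0,k)]$, equivalently that $h(x) - x$ has a single root on $[0,1]$ even when $h$ is nonmonotone. The Q2 bifurcation calculation, by contrast, becomes clean algebraic manipulation once the tangency condition is written in terms of $W$.
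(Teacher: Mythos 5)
Your outline follows essentially the same route as the paper: count real solutions of $ye^{y}+ry=kr$, locate the critical points of $\psi(y)=ye^{y}+ry$ through $(y_{c}+1)e^{y_{c}+1}=-re$, and convert the tangency condition $\psi(y_{c})=kr$ into the quadratic $y_{c}^{2}-ky_{c}-k=0$, which is exactly how the paper arrives at the two implicit equations for $\beta_{r}$ (its auxiliary functions $h_{0}(\beta)$, $h_{-1}(\beta)$ are precisely the comparisons of $kr$ with the two critical values of $\psi$). There are, however, two genuine gaps. First, your plan for $\mathcal{Q}_{\text{\rom{3}}}$ rests on a false premise: extra branches of $W_{r}$ can never be discarded on the grounds that $x^{*}=y/k$ leaves $[0,1]$. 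Since $e^{\beta g(x)}/(1+e^{\beta g(x)})\in(0,1)$ for every real $x$, the map $f$ has no roots outside $(0,1)$, and the equivalence $f(x)=0\iff (kx)e^{kx}+r(kx)=kr$ holds for all real $x$; hence \emph{every} real solution of the $r$-Lambert equation corresponds to a fixed point in $(0,1)$, and what must actually be proved in $\mathcal{Q}_{\text{\rom{3}}}$ is that $kr$ never lies strictly between the two critical values of $\psi$ (this is the content of the paper's ``no roots of $h_{0},h_{-1}$ for $\delta_{RT}\le 0$'' statement in Lemma \ref{lma:h0h1rts}). Your fallback -- a direct single-crossing argument for the sigmoid against the diagonal -- can be completed (for $m>0$ in $\mathcal{Q}_{\text{\rom{3}}}$ one has $g\le 0$ on $[0,1]$, so the sigmoid is convex and increasing there, starts above and ends below the diagonal, and convexity forbids a second crossing; for $m\le 0$ it is nonincreasing), but you only announce it and flag it as an obstacle rather than supply it.

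Second, and more importantly, in $\mathcal{Q}_{\text{\rom{2}}}$ you derive where a tangency \emph{can} occur, but the theorem asserts a trichotomy: exactly one fixed point for all $\beta<\beta_{r}$, two at $\beta=\beta_{r}$, and three for all $\beta>\beta_{r}$. This requires showing that, as $\beta$ increases over $(-\tfrac{2}{\delta_{SP}},\infty)$, the target $kr$ crosses the relevant critical value of $\psi$ \emph{exactly once} and never crosses the other one -- equivalently, that the implicit equation for $\beta_{r}$ has a unique root, the companion equation has none, and the solution count does not switch back. Your proposal offers nothing toward this; it is precisely the heavy lifting the paper performs via the monotonicity of $W_{i}(-e^{\beta\delta_{SP}+1})$ (Lemma \ref{lma:MonoAndAsympW0W1}), the stationary-point analysis of $h_{0},h_{-1}$ (Lemma \ref{lma:h0h1sps}), and the sign analysis at the endpoints of their domains (Lemma \ref{lma:h0h1rts}). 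Also, the branch selection is not determined by an ordering of the critical values (the local maximum is always at the $W_{-1}$ critical point and the local minimum at the $W_{0}$ one); what depends on the sign of $\delta_{RT}+\delta_{SP}$ is \emph{which} critical value the curve $\beta\mapsto kr$ actually meets. Your symmetric case is correct -- the tangency at $x^{*}=\tfrac12$ gives $\beta_{r}=-\tfrac{2}{\delta_{SP}}$, matching the paper -- but the exact count of three for $\beta>-\tfrac{2}{\delta_{SP}}$ again needs the at-most-three-solutions structure made explicit.
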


\begin{proof}
See Appendix \ref{ap:NoOfFPs}
\end{proof}

Theorem \ref{thm:NoOfFPs} concretely establishes the number of fixed points $x^*$ for any $\beta \geq 0$.  Furthermore, given a finite $\beta_{r}$ for $(\delta_{SP},\delta_{RT}) \in \mathcal{Q}_{\text{\rom{2}}}$, Theorem \ref{thm:NoOfFPs} proves that the number of fixed points strictly increases for increasing $\beta$.  More generally, however, as $\beta \to \infty$, this theorem neatly coincides with Theorem \ref{thm:AsympOfFPs} for all types of games.  As such, Theorems \ref{thm:AsympOfFPs} and \ref{thm:NoOfFPs} combined characterize the quantity of fixed points $x^*$ for any level of rationality $\beta \geq 0$ and all games $(\delta_{SP},\delta_{RT}) \in \mathbb{R}^2$.



\subsection{Stability of Fixed Points}\label{sec:FPsStability}
With the location of any fixed point $x^*$ analytically shown using the $r$-Lambert function in Section \ref{sec:r-LamForm} and the quantity thereof proven in Section \ref{sec:noOfFPs}, each fixed point's stability characteristics are all that's left outstanding.  Fortunately, each fixed point's stability as $\beta$ tends to infinity is easily found by taking the limit of $\frac{df}{dx}$ as $\beta$ tends to infinity.

\begin{theorem}[Fixed Point Stability as $\beta \to \infty$]\label{thm:AsympFPStab}
As $\beta$ tends to infinity and for any $(\delta_{SP},\delta_{RT})\in\mathbb{R}^2$, all fixed points are stable except the fixed point $x^*=\frac{\delta_{SP}}{\delta_{SP}-\delta_{RT}}$ in the parameter regime $(\delta_{SP},\delta_{RT})\in\mathcal{Q}_{\text{\rom{2}}}$.
\end{theorem}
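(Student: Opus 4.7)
The plan is to differentiate the right-hand side of \eqref{eq:logit_dyn}, simplify using the fixed-point condition, and then evaluate the limit along each branch of fixed points identified in Theorem \ref{thm:AsympOfFPs}. A direct computation gives
\begin{equation*}
    f'(x) = \beta m \cdot \frac{e^{\beta g(x)}}{(1 + e^{\beta g(x)})^2} - 1,
\end{equation*}
where $m = \dRT - \dSP = g'(x)$. The key simplification is that any fixed point $x^*$ satisfies $e^{\beta g(x^*)}/(1 + e^{\beta g(x^*)}) = x^*$, which immediately yields $e^{\beta g(x^*)}/(1+e^{\beta g(x^*)})^2 = x^*(1-x^*)$, so the derivative reduces to the compact form
\begin{equation}\label{eq:stab_fprime}
    f'(x^*) = \beta m \, x^*(1-x^*) - 1.
\end{equation}
Stability along each branch $\{x^*_\beta\}$ then reduces to evaluating the sign of the limit of \eqref{eq:stab_fprime} as $\beta \to \infty$.

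I would first handle the interior fixed point $x^*_\beta \to x_{\mathrm{NE}} := \dSP/(\dSP - \dRT)$, which, by Theorem \ref{thm:AsympOfFPs}, arises in $\mathcal{Q}_{\text{\rom{2}}}$ and $\mathcal{Q}_{\text{\rom{4}}}$. Since $x_{\mathrm{NE}} \in (0,1)$ on both quadrants, $x^*_\beta(1-x^*_\beta)$ converges to the strictly positive constant $-\dSP\dRT/(\dSP-\dRT)^2$. Hence $f'(x^*_\beta)$ diverges, with sign dictated by $m$: on $\mathcal{Q}_{\text{\rom{2}}}$, $m > 0$ gives $f'(x^*_\beta) \to +\infty$ and instability, while on $\mathcal{Q}_{\text{\rom{4}}}$, $m < 0$ gives $f'(x^*_\beta) \to -\infty$ and stability.

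Next, for the boundary branches $x^*_\beta \to 0$ (arising in $\mathcal{Q}_{\text{\rom{2}}} \cup \mathcal{Q}_{\text{\rom{3}}}$) and $x^*_\beta \to 1$ (arising in $\mathcal{Q}_{\text{\rom{1}}} \cup \mathcal{Q}_{\text{\rom{2}}}$), the goal is to show that the first term in \eqref{eq:stab_fprime} vanishes. Rewriting the fixed-point equation as $x^*_\beta/(1-x^*_\beta) = e^{\beta g(x^*_\beta)}$ and invoking continuity of $g$, one has $g(x^*_\beta) \to \dSP < 0$ on the branch $x^*_\beta \to 0$, and $g(x^*_\beta) \to \dRT > 0$ on the branch $x^*_\beta \to 1$, within the relevant quadrants. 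For large enough $\beta$ this forces bounds of the form $x^*_\beta \leq e^{\beta \dSP / 2}$ and $1 - x^*_\beta \leq e^{-\beta \dRT/2}$, respectively, so the exponential decay dominates the polynomial factor $\beta$ in \eqref{eq:stab_fprime}. Consequently $\beta\, x^*_\beta(1-x^*_\beta) \to 0$ and $f'(x^*_\beta) \to -1 < 0$, proving stability of each boundary fixed point.

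The main obstacle is the exponential-decay estimate that rigorously yields $\beta\, x^*_\beta(1-x^*_\beta) \to 0$ on the boundary branches. Although intuitive, it requires that $g$ evaluated along each branch be bounded away from zero uniformly for all sufficiently large $\beta$; this follows from continuity of $g$ together with the fact that the boundary branches are isolated and well-defined for large $\beta$, as guaranteed by Theorem \ref{thm:NoOfFPs}. Once this estimate is in place, the three cases together cover every asymptotic fixed point listed in Theorem \ref{thm:AsympOfFPs}, and stability is established precisely as claimed.
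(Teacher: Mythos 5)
Your route---differentiate \eqref{eq:logit_dyn}, use the fixed-point identity $e^{\beta g(x^*)}/(1+e^{\beta g(x^*)})=x^*$ to collapse the derivative to $f'(x^*)=\beta m\,x^*(1-x^*)-1$ with $m=\dRT-\dSP$, and evaluate the sign of its limit along each branch from Theorem \ref{thm:AsympOfFPs}---is essentially the paper's strategy of taking $\lim_{\beta\to\infty} \frac{df}{dx}$ at the fixed points, just written in elementary form rather than through the $r$-Lambert expression of Lemma \ref{lma:FPStabCond}. The interior-branch analysis ($f'\to+\infty$ in $\mathcal{Q}_{\text{\rom{2}}}$, $f'\to-\infty$ in $\mathcal{Q}_{\text{\rom{4}}}$) and the boundary-branch analysis in the open quadrants are correct.

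The one step that fails as written is the blanket claim that $\beta\,x^*_\beta(1-x^*_\beta)\to 0$ on every boundary branch because $g(x^*_\beta)$ stays bounded away from zero ``within the relevant quadrants.'' The theorem covers all of $\mathbb{R}^2$, and the paper's $\mathcal{Q}_{\text{\rom{1}}}$ and $\mathcal{Q}_{\text{\rom{3}}}$ include the axes: for instance, with $\dSP>0$, $\dRT=0$ (in $\mathcal{Q}_{\text{\rom{1}}}$) the unique fixed point still tends to $1$, but $g(x^*_\beta)\to\dRT=0$ and in fact $\beta(1-x^*_\beta)$ grows like $\ln\beta$, so the first term in your derivative formula does \emph{not} vanish; the symmetric situation occurs for $\dSP=0$, $\dRT<0$ in $\mathcal{Q}_{\text{\rom{3}}}$. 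The conclusion survives because in exactly these degenerate cases $m\le 0$, so $f'(x^*_\beta)\le -1$ with no decay estimate needed. The clean repair is to split on the sign of $m$: if $m\le 0$ (which also covers $\dSP=\dRT$ and the origin), stability is immediate from $f'(x^*)=\beta m\,x^*(1-x^*)-1\le -1$; if $m>0$, then on the branch $x^*_\beta\to 0$ one necessarily has $\dSP<0$ and on the branch $x^*_\beta\to 1$ one necessarily has $\dRT>0$, so your exponential bounds $x^*_\beta\le e^{\beta\dSP/2}$ and $1-x^*_\beta\le e^{-\beta\dRT/2}$ apply and give $f'(x^*_\beta)\to -1$. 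With that case distinction the argument covers every game in $\mathbb{R}^2$ and the proof is complete.
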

\begin{proof}
See Appendix \ref{ap:AsympFPStab}
\end{proof}

Therefore, on top of the fixed points $x^*$ tending to the full set of Nash equilibria by Theorem \ref{thm:AsympOfFPs}, the fixed points of the logit dynamics have the same stability characteristics for large $\beta$ as the fixed points of replicator equation.  As logit learning with high $\beta$ represents a best-response revision protocol, this result demonstrates that the best-response coincides with imitating others despite the imitative and logit learning rules working in fundamentally different ways.

Furthermore, using the $r$-Lambert function, one acquires the general condition for any fixed point's stability expressed in Theorem \ref{thm:GenFPStab}
\begin{theorem}[Fixed Point Stability for Finite $\beta$]\label{thm:GenFPStab}
The fixed point $x^*$ is
\begin{enumerate}
    \item stable for all $\beta > 0$ if $\delta_{SP} = \delta_{RT}$;
    \item stable for all $\delta_{SP} \neq \delta_{RT}$ if $\beta = 0$; and
    \item stable for some $\beta\neq0$ and $\delta_{SP}\neq\delta_{RT}$ if 
    \begin{equation}\label{eq:genFPstab}
        \frac{1}{k}W_r^2(kr) - W_r(kr) - 1 \leq 0
    \end{equation}
\end{enumerate}
\end{theorem}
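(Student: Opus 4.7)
The plan is to work directly from the logit dynamics \eqref{eq:logit_dyn} and derive a clean closed form for $f'(x^*)$. First I would observe that $f(x) = \sigma(\beta g(x)) - x$, where $\sigma$ is the logistic function, and use the identity $\sigma'(y) = \sigma(y)(1 - \sigma(y))$ together with $g'(x) = \delta_{RT} - \delta_{SP} = m$ to obtain
\begin{equation*}
    f'(x) = \beta m\, \sigma(\beta g(x))\bigl(1 - \sigma(\beta g(x))\bigr) - 1.
\end{equation*}
At a fixed point the defining equation $f(x^*) = 0$ reads $\sigma(\beta g(x^*)) = x^*$, so substituting collapses the derivative to the compact form
\begin{equation*}
    f'(x^*) = \beta m\, x^*(1-x^*) - 1.
\end{equation*}
Local (exponential) stability then reduces to checking whether $\beta m\, x^*(1-x^*) \leq 1$.

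From this expression, cases (1) and (2) are immediate. If $\delta_{SP} = \delta_{RT}$, then $m = 0$ and $f'(x^*) = -1$; if $\beta = 0$, then again $f'(x^*) = -1$. In either case the unique fixed point (given by \eqref{eq:logit_dyn_sol_k_eq_0}) is stable for any admissible value of the remaining parameter.

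For case (3) I would substitute the $r$-Lambert representation \eqref{eq:logit_dyn_sol}, namely $x^* = \tfrac{1}{k} W_r(kr)$ with $k = -\beta m$. Then $\beta m\, x^* = -W_r(kr)$, and
\begin{equation*}
    1 - x^* = 1 - \tfrac{1}{k} W_r(kr),
\end{equation*}
so multiplying yields
\begin{equation*}
    \beta m\, x^*(1-x^*) = \tfrac{1}{k} W_r^2(kr) - W_r(kr).
\end{equation*}
The condition $f'(x^*) \leq 0$ is therefore exactly \eqref{eq:genFPstab}, which proves statement (3).

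The only real obstacle is careful sign bookkeeping through the substitution $k = -\beta m$ (since $m$ may be positive or negative and $W_r$ may have multiple real branches corresponding to the distinct fixed points counted in Theorem \ref{thm:NoOfFPs}). Everything else is direct differentiation and a single algebraic identity; there is no need to invoke any further property of the $r$-Lambert function beyond the fixed-point formula itself.
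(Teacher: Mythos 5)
Your proposal is correct and follows essentially the same route as the paper: differentiate $f$, evaluate at the fixed point, and rewrite the result through the $r$-Lambert representation $x^*=\tfrac{1}{k}W_r(kr)$ to land exactly on \eqref{eq:genFPstab}. Your intermediate collapse $f'(x^*)=\beta m\,x^*(1-x^*)-1$ via $\sigma(\beta g(x^*))=x^*$ is algebraically equivalent to the paper's step of eliminating $e^{W_r(kr)}r^{-1}$ with the defining identity $W_r(kr)e^{W_r(kr)}+rW_r(kr)=kr$, and it has the small added benefit of making cases (1) and (2) ($m=0$ or $\beta=0$ giving $f'(x^*)=-1$) immediate, which the paper's appendix leaves implicit.
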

\begin{proof}
See Appendix \ref{ap:FPStabCond}
\end{proof}
This theorem highlights again that the $r$-Lambert function relates yet another fundamental trait of the logit dynamics: its stability.  The following section checks this and the previous results numerically for each type of game and different levels of rationality.

\section{Simulations}\label{sec:Simulations}
Simulations confirm the theorems and lemmas presented herein and in the appendices.  Figure \ref{fig:FPsByParamReg} illustrates representative plots of $x^*$ values in all four quadrants of the parameter space for increasing $\beta$ given constant parameters $(\delta_{SP},\delta_{RT})$.  All simulations confirm Theorems \ref{thm:AsympOfFPs}, \ref{thm:NoOfFPs}, and \ref{thm:AsympFPStab}, as illustrated in Figure \ref{fig:FPsByParamReg}.

Furthermore, simulations suggest that there exist further properties of this system.  As illustrated in Figure \ref{fig:FPsByParamReg}, all simulations suggest that the fixed points tend \emph{monotonically} to the Nash equilibria for each game type. On top of this, numerically computing \eqref{eq:genFPstab} suggests that all fixed points are stable except the one that approaches the mixed Nash equilibrium for coordination games.  In other words, increasingly better responses tend to the replicator equation's response in terms of fixed point location and stability.  As such, analytical characterizations of monotonicity and stability will be the subjects of future investigation.

\begin{figure}[h]
    \centering
    \includegraphics[width=1\linewidth]{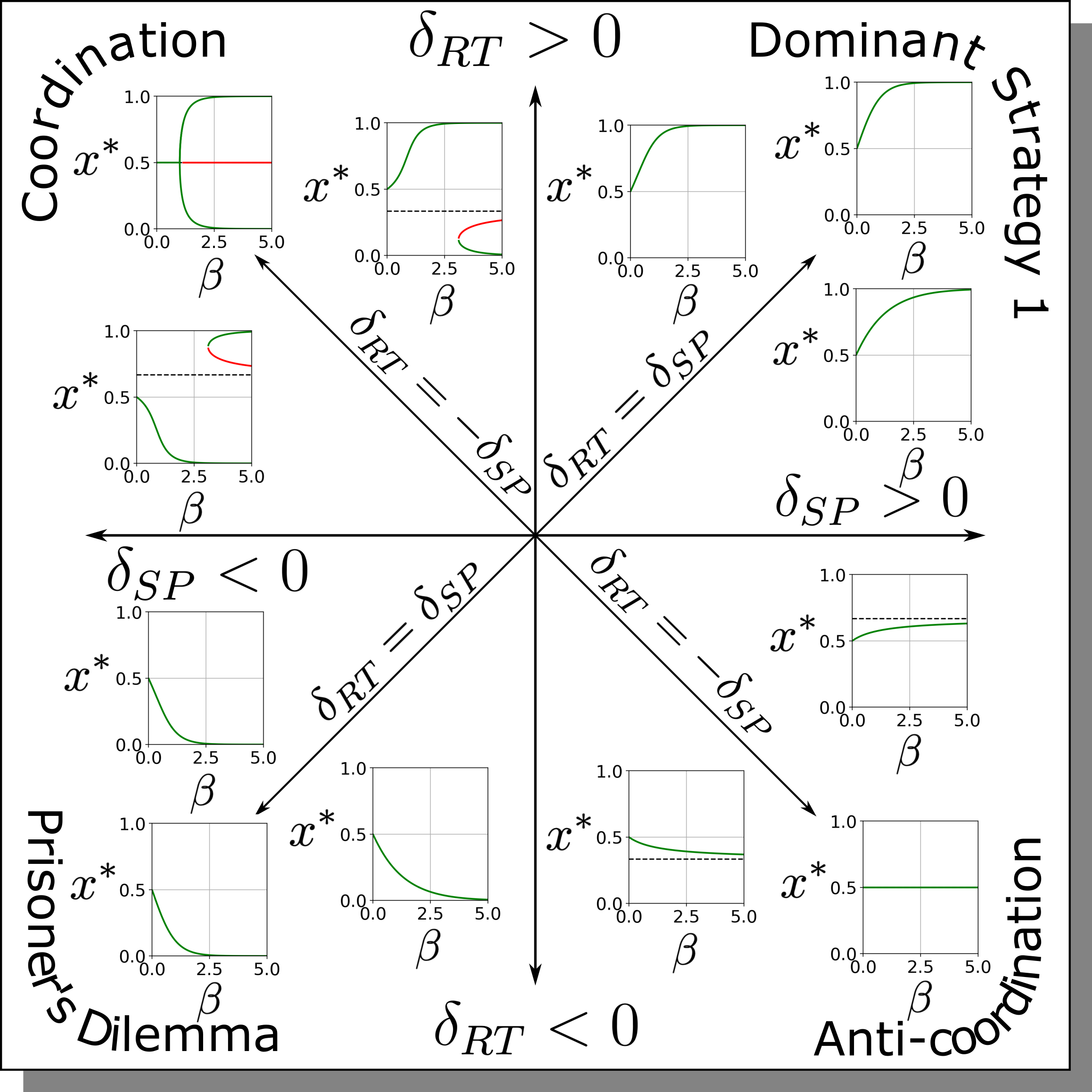}
    \caption{Plots of the fixed points $x^*$ of \eqref{eq:logit_dyn} as a function of $\beta$ for points in the parameter space $(\delta_{SP}, \delta_{RT}) = (\pm 1, \pm 2)$, $(\pm 2, \pm 1)$, and $(\pm 2, \pm 2)$.  Dashed lines denote mixed Nash equilibria.  This diagram plots stable values of $x^*$ in green and unstable ones in red, as calculated numerically using \eqref{eq:genFPstab}.  Gaps in the plots indicate the limits of the numerical methods used to compute the solutions to \eqref{eq:logit_dyn_sol}.}\label{fig:FPsByParamReg}
    \label{fig:enter-label}
\end{figure}

\section{Conclusions}\label{sec:Conclusion}
This research analytically characterized evolutionary game theoretic models using the logit revision protocol with the $r$-Lambert function.  We showed that at low rationality levels only coordination games exhibit bifurcations.  At very high rationality levels, however, logit learning exhibits the same dynamical outcomes, both in terms of fixed point location and stability as imitation learning despite belonging to a fundamentally different class of revision protocol.

Future research will endeavor to demonstrate analytically that the fixed points $x^*$ tend monotonically towards the Nash equilibria for each game and to analytically characterize the stability for where $k \neq 0$.  This research would directly influence any future research into the control of the logit dynamics using the rationality parameter, which would have potential applications towards industrial robotics with limited resources like time, energy, and materials.

\bibliographystyle{IEEEtran}
\bibliography{library}

\appendix

%

\subsection{Proof of Theorem \ref{thm:AsympOfFPs}}\label{ap:asympFP}
To find the number of solutions for large $\beta$ entails only taking the limit of $f(x^*)=0$ as $\beta$ tends to infinity.

From section \ref{sec:r-LamForm}, one knows that for $k=0$  $x^* = \frac{r}{1+r}$. Therefore, for $(\delta_{SP}, \delta_{RT}) = (0,0)$, $x^* = \frac{1}{2}$.

Next, consider $(\delta_{SP},\delta_{RT})\in\mathcal{Q}_{\text{\rom{1}}}$.  To find the value of $x^*$ for increasing values of $\beta$, take the limit
\begin{equation}\label{eq:betatoinf}
1 = x^* + x^*\lim_{\beta \to \infty}e^{-\beta(mx^*+\delta_{SP})}.
\end{equation}

Define $\mu := mx^*+\delta_{SP}$.  For all $(\delta_{SP},\delta_{RT}) \in \mathcal{Q}_{\text{\rom{1}}}$, $m \geq -\delta_{SP}$.  Therefore, $\mu \geq 0$ for any $x^*\in[0,1]$.  As such, $-\beta \mu \leq 0$  for all $(\delta_{SP},\delta_{RT}) \in \mathcal{Q}_{\text{\rom{1}}}$ and $\beta \geq 0$.

If $m>-\delta_{SP}$, the limit evaluates to \[
1 = x^* + x^*\lim_{\beta \to \infty} e^{-\beta\mu} \implies 1 = x^* + 0^+.
\]

If $m=-\delta_{SP}$, the limit simplifies to \[
1 = x^* + x^*\lim_{\beta \to \infty} e^{-\beta\delta_{SP}(1-x^*)} \implies 1 = x^* + 0^+.
\]

Therefore, $x^* \to 1$ from below as $\beta \to \infty$ for all $(\delta_{SP},\delta_{RT}) \in \mathcal{Q}_{\text{\rom{1}}}$ and all $\beta \geq 0$.

Next, consider $(\delta_{SP},\delta_{RT})\in\mathcal{Q}_{\text{\rom{2}}}$.  Rearranging the limit \eqref{eq:betatoinf}  
one acquires 
\begin{equation}\label{eq:altbetatoinf}
\frac{1-x^*}{x^*} = \lim_{\beta \to +\infty}\left(e^{x^* + \frac{\delta_{SP}}{m}}\right)^{-\beta m }
\end{equation}

The value of $m$ is greater than zero for all $(\delta_{SP},\delta_{RT})\in\mathcal{Q}_{\text{\rom{2}}}$.  Hence, the limit tends to:
\begin{enumerate}
\item $+\infty$ if $x^*<\frac{\delta_{SP}}{\delta_{SP}-\delta_{RT}}$;
\item $0^+$ if $x^*>\frac{\delta_{SP}}{\delta_{SP}-\delta_{RT}}$; or
\item $+1$ if $x^*=\frac{\delta_{SP}}{\delta_{SP}-\delta_{RT}}$.
\end{enumerate}

Checking these three cases, one finds that $x^*$ approaches $1$, $0$, and $\frac{\delta_{SP}}{\delta_{SP}-\delta_{RT}}$ as $\beta$ tends to infinity for parameters in the region $(\delta_{SP},\delta_{RT})\in\mathcal{Q}_{\text{\rom{2}}}$.

Next, consider $(\delta_{SP},\delta_{RT})\in\mathcal{Q}_{\text{\rom{3}}}$.  To find the value of $x^*$ for increasing values of $\beta$, take the limit \eqref{eq:betatoinf} in this parameter regime.

Again, define $\mu := mx^*+\delta_{SP}$.  For all $(\delta_{SP},\delta_{RT}) \in \mathcal{Q}_{\text{\rom{3}}}$, $m \leq -\delta_{SP}$.  Therefore, $\mu \leq 0$ for any $x\in[0,1]$.  As such, $-\beta \mu \geq 0$ for all $(\delta_{SP},\delta_{RT}) \in \mathcal{Q}_{\text{\rom{3}}}$ and $\beta \geq 0$.

Dividing both sides of \eqref{eq:betatoinf} by $e^{-\beta (mx+\delta_{SP0})}$ and taking the limit of the resulting expression, one finds that $x^* \to 0$ from above as $\beta \to \infty$ for all $(\delta_{SP},\delta_{RT}) \in \mathcal{Q}_{\text{\rom{3}}}$ and all $\beta \geq 0$.

Next, consider $(\delta_{SP},\delta_{RT})\in\mathcal{Q}_{\text{\rom{4}}}$.  Take the natural logarithm of \eqref{eq:altbetatoinf}, which yields \[
\ln(1-x^*) - \ln(x^*)= \lim_{\beta \to \infty} \left( -\beta m \left(x^* + \frac{\delta_{SP}}{m}\right) \right).
\]

Dividing both sides by $-\beta m$ yields\[
\lim_{\beta\to\infty} \left(\frac{\ln(1-x^*)}{-\beta m} - \frac{\ln(x^*)}{-\beta m}\right) =\lim_{\beta\to\infty}\left( x^* + \frac{\delta_{SP}}{m}\right).
\]

Finally, by evaluating the above limit, one finds that the fixed point $x^* \to \frac{\delta_{SP}}{\delta_{SP}-\delta_{RT}}$ as $\beta \to \infty$ for all $(\delta_{SP},\delta_{RT}) \in \mathcal{Q}_{\text{\rom{4}}}$ and all $\beta \geq 0$,\footnote{Furthermore, one can also show this using \cite[Theorem 8]{mezHo2017generalization}, i.e., \[
\lim_{\beta\to\infty} x^* = \lim_{\beta\to\infty} \frac{1}{k} \left(\ln(kr) + \ln\left( \frac{1}{\ln(kr)} - \frac{r}{kr}\right) \right) = -\frac{\delta_{SP}}{\delta_{RT}-\delta_{SP}}
\]} thereby completing the proof of Theorem \ref{thm:AsympOfFPs}.

\subsection{Proof of Theorem \ref{thm:NoOfFPs}}\label{ap:NoOfFPs}
Using the results found by Mez\H o and Baricz \cite[Theorem 4]{mezHo2017generalization}, the number of solutions to the $r$-Lambert function depends on the values of $r$ and $kr$.  Define $\alpha_{-1} := W_{-1}(-re) - 1$, $\alpha_{0} := W_{0}(-re) - 1$, and $f_r(z) = ze^z + rz$ as in \cite[Theorem 4]{mezHo2017generalization}.

\begin{lemma}\label{lma:NoOfSolns}
    The conditions for certain quantities of fixed points where $k \neq 0$ are as follows.
\begin{itemize}
\item If $r \geq e^{-2}$, then $\lvert \mathcal{S}^* \rvert = 1$;
\item If $0 < r < e^{-2}$, then
    \begin{itemize}
        \item $\lvert \mathcal{S}^*\rvert = 1$ if $kr > f_r(\alpha_{-1})$ or $kr < f_r(\alpha_{0})$;
        \item $\lvert \mathcal{S}^*\rvert = 2$ if $kr = f_r(\alpha_{-1})$ or $kr = f_r(\alpha_{0})$; and
        \item $\lvert \mathcal{S}^*\rvert = 3$ if $f_r(\alpha_{0}) < kr < f_r(\alpha_{-1})$: and
    \end{itemize}
\item If $r < 0$, then
    \begin{itemize}
        \item $\lvert \mathcal{S}^*\rvert = 1$ if $kr = f_r(\alpha_{0})$; and
        \item $\lvert \mathcal{S}^*\rvert = 2$ if $kr > f_r(\alpha_{0})$.
    \end{itemize}
\end{itemize}
\end{lemma}

As $r = e^{\beta \delta_{SP}}$, $r$ is never less than or equal to zero for all $\beta \geq 0$ and any $\delta_{SP}\in\mathbb{R}$.  Solving for when $r \geq e^{-2}$ yields
\begin{equation*}
r \geq e^{-2} \implies \begin{cases} \beta \geq -\frac{2}{\delta_{SP}} & \delta_{SP}>0 \\
\beta \leq -\frac{2}{\delta_{SP}} & \delta_{SP}<0
\end{cases}
\end{equation*}

If $\delta_{SP}>0$ then $-\frac{2}{\delta_{SP}}<0$.  As $\beta$ must be greater than zero by definition, for any model with $\delta_{SP}>0$, there will be only one solution.  Hence, $\lvert \mathcal{S}^*\rvert = 1$ for all $(\delta_{SP},\delta_{RT})\in\mathcal{Q}_{\text{\rom{1}}}\cup\mathcal{Q}_{\text{\rom{4}}}$.

On the other hand, for models with $(\delta_{SP},\delta_{RT})\in\mathcal{Q}_{\text{\rom{2}}}\cup\mathcal{Q}_{\text{\rom{3}}}$, $\delta_{SP}<0$ and $-\frac{2}{\delta_{SP}}>0$.  By Lemma \ref{lma:NoOfSolns}, there is an interval $[0,-\frac{2}{\delta_{SP}}]$ wherein there is guaranteed to be only one solution, but for all $\beta > -\frac{2}{\delta_{SP}}$, there are one, two, or three solutions.

To find when there are one, two, or three fixed points for $(\delta_{SP},\delta_{RT})\in\mathcal{Q}_{\text{\rom{2}}}\cup\mathcal{Q}_{\text{\rom{3}}}$, one can rearrange $kr = f_r(\alpha_{i})$ to acquire
\begin{equation*}
W_{i}^2(-re) + (\beta m - 2)W_{i}(-re) + 1 = 0
\end{equation*}
for $i=0,-1$, where $i$ represents a particular branch of the Lambert $W$ function.

Define
\begin{equation*}
h_i(\beta):=W_{i}^2(-re) + (\beta m - 2)W_{i}(-re) + 1
\end{equation*}
for $i=0,-1$ whose domain is the interval $(-\frac{2}{\delta_{SP}},\infty)$.  With this definition, one can express the requirements of Lemma \ref{lma:NoOfSolns} as those specific to the logit dynamics, summarized in Lemma \ref{lma:NoOfSolnsUpdated}.

\begin{lemma}[Quantity of Logit Dynamics Fixed Points]\label{lma:NoOfSolnsUpdated} 
For all $(\delta_{SP},\delta_{RT}) \in \mathcal{Q}_{\text{\rom{1}}}\cup\mathcal{Q}_{\text{\rom{4}}}$, $\lvert \mathcal{S}^*\rvert = 1$. For all $(\delta_{SP},\delta_{RT}) \in \mathcal{Q}_{\text{\rom{2}}}\cup\mathcal{Q}_{\text{\rom{3}}}$,
\begin{itemize}
\item If $\beta \leq -\frac{2}{\delta_{SP}}$, then $\lvert \mathcal{S}^* \rvert = 1$;
\item If $\beta > -\frac{2}{\delta_{SP}}$, then
    \begin{itemize}
        \item $\lvert \mathcal{S}^*\rvert = 1$ if $h_{-1}(\beta) > 0$ or $h_{0}(\beta) < 0$;
        \item $\lvert \mathcal{S}^*\rvert = 2$ if $h_{-1}(\beta) = 0$ or $h_{0}(\beta) = 0$; and
        \item $\lvert \mathcal{S}^*\rvert = 3$ if $h_{-1}(\beta) < 0$ and $h_{0}(\beta) > 0$.
    \end{itemize}
\end{itemize} 
\end{lemma}

Without the ability to directly ascertain the stationary points of the function $h_i(\beta)$ which is quadratic in $W_i(-re)$ another way to demonstrate which intervals of $\beta$ satisfy the conditions in Lemma \ref{lma:NoOfSolnsUpdated} is to check $h_i(\beta)$'s stationary points and roots.  These analyses, given in Appendices \ref{ap:h0h1sps} and \ref{ap:h0h1rts}, are summarized by Lemmas \ref{lma:h0h1sps} and \ref{lma:h0h1rts}.

\begin{table}[ht]
\begin{center}
\begin{tabular}{c c c c}
Parameter Regime & Limit & $h_0(\beta)$ & $h_{-1}(\beta)$ \\
\hline
$\delta_{RT} > -\delta_{SP}$ & $\beta \to -\frac{2}{\delta_{SP}}$ & neg. & neg. \\
$\delta_{RT} > -\delta_{SP}$ & $\beta \to \infty$ & pos. & neg. \\
$\delta_{RT} = -\delta_{SP}$ & $\beta \to -\frac{2}{\delta_{SP}}$ & 0 & neg. \\
$\delta_{RT} = -\delta_{SP}$ & $\beta \to \infty$ & 0 & neg.\\
$0 < \delta_{RT} < -\delta_{SP}$ & $\beta \to -\frac{2}{\delta_{SP}}$& pos. & pos.\\
$0 < \delta_{RT} < -\delta_{SP}$ & $\beta \to \infty$& pos. & neg. \\
$\delta_{RT} \leq 0$ & $\beta \to -\frac{2}{\delta_{SP}}$& pos.& pos. \\
$\delta_{RT} \leq 0$ & $\beta \to \infty$& pos. & pos.\\
\end{tabular}
\caption{Table listing the sign of the values of $h_0(\beta)$ and $h_{-1}(\beta)$ at the limits of their domain $(-\frac{2}{\delta_{SP}},\infty)$ for parameters $(\delta_{SP},\delta_{RT}) \in\mathcal{Q}_{\text{\rom{2}}}\cup \mathcal{Q}_{\text{\rom{3}}}$.}\label{tab:h0h1sign}
\end{center}
\end{table}

Combining the knowledge of Lemmas \ref{lma:h0h1sps} and \ref{lma:h0h1rts} along with $h_0(\beta)$ and $h_{-1}(\beta)$'s signs at the extremes of their domains listed in Table \ref{tab:h0h1sign}, one can deduce the results of Theorem \ref{thm:NoOfFPs}.

For $(\delta_{SP},\delta_{RT}) \in \mathcal{Q}_{\text{\rom{2}}}\cap\left\{(\delta_{SP},\delta_{RT}) \; \middle| \; \delta_{RT} > -\delta_{SP} \right\}$, $h_0(\beta)$ and $h_{-1}(\beta)$ have no stationary points by Lemma \ref{lma:h0h1sps} and only $h_0(\beta)$ has one root by Lemma \ref{lma:h0h1rts}.  As $h_{-1}(\beta)$ has no roots and is negative at both extremes of its domain, $h_{-1}(\beta)$ is less than zero over its entire domain. The function $h_{0}(\beta)$ has one root, has no stationary points, starts negative-valued at the beginning of its domain, and tends to be positive valued towards the end of its domain.  As such, one knows that:
\begin{itemize}
    \item $h_{0}(\beta)<0$ for $\beta$ such that $-\frac{2}{\delta_{SP}} < \beta < \beta_{r,0}$;
    \item  $h_0(\beta)=0$ when $\beta = \beta_{r,0}$; and
    \item $h_0(\beta)>0$ for all $\beta > \beta_{r,0}$,
\end{itemize}
for $\beta_{r,0}$ defined as in Lemma \ref{lma:h0h1rts}.
Therefore, by Lemma \ref{lma:NoOfSolnsUpdated}, one finds that 
\begin{itemize}
    \item $\lvert \mathcal{S}^* \rvert = 1$ for $\beta < \beta_{r,0}$ as $h_0(\beta)<0$ and $h_{-1}(\beta)<0$;
    \item $\lvert \mathcal{S}^* \rvert = 2$ for $\beta = \beta_{r,0}$ as $h_0(\beta)=0$ and $h_{-1}(\beta)<0$; and 
    \item $\lvert \mathcal{S}^* \rvert = 3$ for $\beta > \beta_{r,0}$ as $h_0(\beta)>0$ and $h_{-1}(\beta)<0$.
\end{itemize}

Using a similar deductive method for the rest of the parameter space $\mathcal{Q}_{\text{\rom{2}}}\cup \mathcal{Q}_{\text{\rom{3}}}$, one confirms the results of Theorem \ref{thm:NoOfFPs}, thereby proving Theorem \ref{thm:NoOfFPs}.

\subsection{Proof of Theorem \ref{thm:AsympFPStab}}\label{ap:AsympFPStab}
Checking the stability of \eqref{eq:logit_dyn} entails checking $\left. \frac{df}{dt}\right\rvert_{x^*} \leq 0$.  By Lemma \ref{lma:FPStabCond}, as $\beta\to\infty$, this amounts to showing the following holds true.
\begin{equation*}
\lim_{\beta\to\infty} \frac{1}{k}W_r^2(kr) -W_r(kr) - 1 \leq 0
\end{equation*}

Factoring this expression yields
\begin{equation*}
\lim_{\beta\to\infty} W_r(kr)\left(\frac{1}{k}W_r(kr) -1\right) - 1 \leq 0.
\end{equation*}

As $x^* = \frac{1}{k}W_r(kr)$, one acquires
\begin{equation*}
\lim_{\beta\to\infty} kx^*\left(x^* -1\right) - 1 \leq 0.
\end{equation*}

As the values of $x^*$ are known as $\beta$ tends to infinity, one needs only to substitute each $x^*$ along with its corresponding the parameters $(\delta_{SP},\delta_{RT})$ into this expression and evaluate the limit.  By evaluating this limit for each fixed point $x^*$ in each quadrant of the parameter space, one finds that only the limit for the fixed point $x^* = \frac{\delta_{SP}}{\delta_{SP}-\delta_{RT}}$ for parameters $(\delta_{SP},\delta_{RT})\in\mathcal{Q}_{\text{\rom{2}}}$ evaluates to a value greater than zero.

\subsection{The Stationary Points of $h_0(\beta)$ and $h_{-1}(\beta)$}\label{ap:h0h1sps}
The impossibility of algebraically finding $\beta$  such that $h_i(\beta)=0$ necessitates an indirect analysis of the function $h_i(\beta)$.  This appendix details the analysis of $h_i(\beta)$'s stationary points for $i=0,-1$.  The results are summarized in Lemma \ref{lma:h0h1sps}.
\begin{lemma}[Stationary Points of $h_i(\beta)$]\label{lma:h0h1sps}
For games with parameters $(\delta_{SP},\delta_{RT})\in \mathcal{Q}_{\text{\rom{2}}} \cup \mathcal{Q}_{\text{\rom{3}}}$, the function
\begin{itemize}
    \item $h_0(\beta)$ only has stationary points in the parameter regime $\delta_{SP} < \delta_{RT} < -\delta_{SP}$.  In this parameter regime, $h_0(\beta)$ has only one stationary point.
    \item $h_{-1}(\beta)$ only has stationary points in the parameter regime $0 < \delta_{RT} < -\delta_{SP}$. In this parameter regime, $h_{-1}(\beta)$ has only one stationary point.
\end{itemize}
\end{lemma}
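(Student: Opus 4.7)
The plan is to work directly from the explicit forms of $h_0$ and $h_{-1}$ obtained by rearranging the implicit equations for $\beta_r$ from Theorem~\ref{thm:NoOfFPs}, namely
\[
h_0(\beta) = W_0(-re) - 1 + \tfrac{\beta m}{2} - \tfrac{1}{2}\sqrt{\beta m(\beta m - 4)},
\]
and the analogous $h_{-1}$ with $W_{-1}$ in place of $W_0$ and the opposite sign on the square root, where $r = e^{\beta \dSP}$ and $m = \dRT - \dSP$. Stationary points are the zeros of $h_i'(\beta)$, so the first step is to differentiate. Using the identity $W_i'(z) = W_i(z)/[z(1+W_i(z))]$ together with $\tfrac{d}{d\beta}(-re) = \dSP(-re)$, the Lambert-$W$ contribution collapses cleanly to $\dSP W_i(-re)/(1+W_i(-re))$, while the remaining pieces contribute $m/2 \mp m(\beta m - 2)/[2\sqrt{\beta m(\beta m-4)}]$.

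The core step is then to isolate $W_i(-re)$ on one side of $h_i'(\beta) = 0$, producing a transcendental equation of the form $W_i(-re) = F_i(\beta m)$ for a rational function $F_i$ of a single variable. I would then compare the two sides as functions of $\beta$: $W_i(-re)$ with $r = e^{\beta \dSP}$ is strictly monotone in $\beta$ with sign determined by $\dSP$, while $F_i(\beta m)$ is strictly monotone in $\beta m$ on its admissible range; a standard monotonicity argument then yields at most one intersection, which gives the uniqueness claim. Existence in the stated regimes is obtained by comparing ranges: the image of $W_0(-re)$ sits in $[-1,0)$ and the image of $W_{-1}(-re)$ sits in $(-\infty,-1]$, so a crossing with $F_i$ can occur only when $F_i(\beta m)$ falls inside the corresponding range, and chasing this back through the formulas for $F_i$ forces $(\dSP, \dRT)$ into $\dSP < \dRT < -\dSP$ for $h_0$, and into $0 < \dRT < -\dSP$ for $h_{-1}$.

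The main obstacle I expect is ruling out stationary points throughout the complementary regimes, i.e.\ showing that $h_i'(\beta)$ is \emph{sign-definite} there rather than merely nonvanishing at isolated test points. This requires a careful three-way sign analysis in each of $\mathcal{Q}_{\text{\rom{2}}}$ and $\mathcal{Q}_{\text{\rom{3}}}$ simultaneously tracking $\mathrm{sgn}(\dSP)$, $\mathrm{sgn}(m)$, and $\mathrm{sgn}(\beta m - 2)$, together with the branch-domain constraint $-re \in [-1/e, 0)$ (equivalently $\beta \geq -1/\dSP$) needed for $W_i(-re)$ to be real. I anticipate that in each excluded regime one additive piece of $h_i'(\beta)$ will dominate the others uniformly in $\beta$: either the $W_i$ term saturates at its branch boundary, $\beta m$ cannot escape $(0,4)$ where the square root is imaginary, or the signs of $\dSP$ and the square-root derivative align so that no cancellation is possible. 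Organising the argument around these three mutually exclusive dominance mechanisms should yield the required sign-definiteness and complete the proof.
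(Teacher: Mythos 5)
The central problem is that your argument analyzes a different function from the paper's $h_i$. In the paper, $h_i(\beta)$ is (up to an overall constant) the \emph{quadratic} in $W_i(-re)$,
\begin{equation*}
h_i(\beta) \;=\; W_i^2(-re) + (\beta m - 2)\,W_i(-re) + 1,
\end{equation*}
whose roots in the variable $W_i$ are exactly $1-\tfrac{\beta m}{2} \pm \tfrac12\sqrt{\beta m(\beta m -4)}$ -- that is where the implicit equations for $\beta_r$ come from -- and it is defined and smooth for every $\beta \geq -2/\dSP$. The surrogate you wrote down, $W_0(-re) - 1 + \tfrac{\beta m}{2} - \tfrac12\sqrt{\beta m(\beta m-4)}$, shares zeros with $h_0$ where it is defined, but it is not even real-valued when $0<\beta m<4$, and its stationary points are not those of $h_0$: the square-root term contributes to its derivative, whereas no such term appears in $h_i'$. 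Since the lemma asserts, for instance, exactly one stationary point whenever $\dSP<\dRT<-\dSP$ -- a regime containing games with $m>0$ where your function is undefined on part of the relevant $\beta$-range -- the statement cannot be recovered by studying your surrogate. The misidentification is understandable (the definition of $h_i$ sits in a part of the paper you could not see), but it is fatal to the proof as planned.

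Beyond this, the two decisive steps are left as assertions in your sketch, and the paper's proof shows why they become easy only for the correct $h_i$. Differentiating the quadratic with $\tfrac{d}{d\beta}W_i(-re)=\dSP W_i(-re)/(1+W_i(-re))$ gives
\begin{equation*}
h_i'(\beta) \;=\; \frac{W_i^2(-re)\,(2\dSP+m) + W_i(-re)\,\bigl(\dSP(\beta m-2)+m\bigr)}{1+W_i(-re)},
\end{equation*}
so $h_i'(\beta)=0$ forces either $W_i(-re)=0$, which is impossible for both branches, or $W_i(-re)=\gamma(\beta):=\frac{-\beta\dSP m + 2\dSP - m}{2\dSP+m}$, a \emph{linear} function with $\gamma(-2/\dSP)=1$ and slope $-\dSP m/(2\dSP+m)$. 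Because both branches start at $-1<1$, with $W_0(-re)$ increasing and confined to $(-1,0)$ and $W_{-1}(-re)$ decreasing with slope tending to $\dSP$, an intersection with the line exists precisely when the line's slope is negative (for $W_0$), respectively below $\dSP$ (for $W_{-1}$), and is then unique; translating these slope conditions yields exactly the regimes $\dSP<\dRT<-\dSP$ and $0<\dRT<-\dSP$. In your plan, ``both sides monotone'' does not by itself give at most one intersection (two functions increasing in the same direction can cross repeatedly), and non-existence in the complementary regimes is not obtained in the paper by any dominance estimate but by the elementary geometric fact that a line starting above the branch with slope on the wrong side of the threshold can never meet it. If you redo the computation with the correct quadratic $h_i$, your overall strategy -- reduce $h_i'=0$ to an intersection $W_i(-re)=F(\beta)$ and compare monotone curves -- collapses to the paper's argument, with $F=\gamma$ linear.
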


\begin{proof}
The derivative of $h_i(\beta)$ is
\begin{align*}
\frac{dh_i}{d\beta} &= \frac{W_{i}^2\left(-re\right)(2\delta_{SP}+m)}{1+W_{i}\left(-re\right)}\\ 
				&+ \frac{W_{i}\left(-re\right)(\delta_{SP}(\beta m-2)+m)}{1+W_{i}\left(-re\right)}.
\end{align*}

Hence, the function $h_i(\beta)$ has a stationary point at $\beta_{s}$, implicitly defined by 
\begin{align*}
W_{i}(-re) =& 0, \\
W_{i}(-re) =& \frac{- \beta_s \delta_{SP} m + 2 \delta_{SP} - m}{2 \delta_{SP} + m}.
\end{align*}

By Lemma \ref{lma:MonoAndAsympW0W1}, neither $W_0(-re)$ nor $W_{-1}(-re)$ will ever equal zero.  Hence, there can only be one stationary point, if any.

To analyze this stationary point, define
\begin{equation*}
\gamma(\beta) := \frac{- \beta \delta_{SP} m + 2 \delta_{SP} - m}{2 \delta_{SP} + m}.
\end{equation*}

For there to be a stationary point $\beta_s$, the functions $W_i(-re)$ and $\gamma(\beta)$ must intersect at $\beta = \beta_s$.  By inspection, $\gamma(\beta)$ is linear with a slope of $-\frac{\delta_{SP} m}{2 \delta_{SP} + m}$.  Furthermore, $\gamma(-\frac{2}{\delta_{SP}}) = 1$.  Therefore, $\gamma(\beta)$ is greater than $W_i(-e^{\beta \delta_{SP} +1})$ for $i=0,-1$ near $-\frac{2}{\delta_{SP}}$.

By Lemma \ref{lma:MonoAndAsympW0W1},  $W_0(\beta)$ and $W_{-1}(\beta)$ are monotonic.  The former's slope tends from $+\infty$ to $0$, and the latter's slope tends from $-\infty$ to $\delta_{SP}$. As $\gamma(\beta)$ is greater than $W_i(-re)$ near $-\frac{2}{\delta_{SP}}$ for $i=0,-1$, $\gamma(\beta)$'s slope must therefore be negative for it to intersect $W_0(-re)$.  Furthermore, $\gamma(\beta)$'s slope must less than $\delta_{SP}$ for it to intersect $W_{-1}(-re)$.

Considering only $(\delta_{SP},\delta_{RT}) \in \mathcal{Q}_{\text{\rom{2}}} \cup \mathcal{Q}_{\text{\rom{3}}}$, $\gamma(\beta)$'s slope is negative for all $\delta_{RT}$ such that $\delta_{SP} < \delta_{RT} < -\delta_{SP}$.  Furthermore, $\gamma(\beta)$'s slope is less than $\delta_{SP}$ for $\delta_{RT} > 0$.  Therefore, 
\begin{itemize}
    \item $h_0(\beta)$ has one stationary point for all $(\delta_{SP},\delta_{RT}) \in \mathcal{Q}_{\text{\rom{2}}} \cup \mathcal{Q}_{\text{\rom{3}}}$ for $\delta_{RT}$ such that $\delta_{SP} < \delta_{RT} < -\delta_{SP}$;
    \item $h_{-1}(\beta)$ has one stationary point for all $(\delta_{SP},\delta_{RT}) \in \mathcal{Q}_{\text{\rom{2}}} \cup \mathcal{Q}_{\text{\rom{3}}}$ for $\delta_{RT}$ such that $0 < \delta_{RT} < -\delta_{SP}$; and
    \item $h_0(\beta)$ and $h_{-1}(\beta)$ have no stationary points for all $(\delta_{SP},\delta_{RT}) \in \mathcal{Q}_{\text{\rom{2}}} \cup \mathcal{Q}_{\text{\rom{3}}}$ for $\delta_{RT}$ such that $\delta_{RT} \leq \delta_{SP}$ and $\delta_{RT} \geq -\delta_{SP}$.
\end{itemize}

To show that the stationary point of $h_0(\beta)$ is always positive, note that $W_0(-re)\in (-1,0)$.  As the existence of stationary points is guaranteed for $\delta_{SP} < \delta_{RT} < -\delta_{SP}$, $\gamma(\beta)$ must too be in the interval $(-1,0)$ for some $\beta > -\frac{2}{\delta_{SP}}$.  Solving the inequality $-1 < \gamma(\beta) < 0$ demonstrates that there is a stationary point $\beta_s$ such that
\begin{equation*}
\frac{2\delta_{SP} - m}{\delta_{SP}m} < \beta_s < \frac{4}{m}.
\end{equation*}

Note that $-\frac{2}{\delta_{SP}}<\frac{2\delta_{SP} - m}{\delta_{SP}m}$ for all $\delta_{SP} < \delta_{RT} < -\delta_{SP}$.  The value of $h_i(\beta)$ evaluated at $\beta_s$ is
\begin{equation*}
h_i(\beta) = - \frac{m \left(\beta_s m - 4\right) \left(\beta_s \delta_{SP0} \left(\delta_{SP0} + m\right) + m\right)}{\left(2 \delta_{SP0} + m\right)^{2}}
\end{equation*}

The denominator of this expression is always positive.  Furthermore, as $\beta_s < \frac{4}{m}$ and $m > 0$ for all $\delta_{SP} < \delta_{RT} < -\delta_{SP}$, the term $-m(\beta_s m - 4)$ is alway positive too.  Therefore, the sign of $h_i(\beta)$ depends on the sign of $\beta_s \delta_{SP0} \left(\delta_{SP0} + m\right) + m$.

To prove that $h_0(\beta_s)>0$ by contradiction, assume that $\beta_s \delta_{SP0} \left(\delta_{SP0} + m\right) + m<0$.  If this is true, then so must be
\begin{equation*}
\beta_s > -\frac{m}{\delta_{SP0} \left(\delta_{SP0} + m\right)} \ \text{if $\delta_{RT}>0$}, \text{ and}
\end{equation*}
\begin{equation*}
\beta_s < -\frac{m}{\delta_{SP0} \left(\delta_{SP0} + m\right)} \ \text{if $\delta_{RT}<0$}.
\end{equation*}

As there is as stationary point $\beta_s \in (\frac{2\delta_{SP} - m}{\delta_{SP}m},\frac{4}{m})$, for these inequalities to be true, so must be
\begin{equation*}
-\frac{m}{\delta_{SP0} \left(\delta_{SP0} + m\right)} < \frac{4}{m} \ \text{if $\delta_{RT}>0$}, \text{ and}
\end{equation*}
\begin{equation*}
-\frac{m}{\delta_{SP0} \left(\delta_{SP0} + m\right)} > \frac{2\delta_{SP} - m}{\delta_{SP}m} \ \text{if $\delta_{RT}<0$}.
\end{equation*}

However, these inequalities imply that
\begin{equation*}
0 > (m+2\delta_{SP})^2 \ \text{if $\delta_{RT}>0$}, \text{ and}
\end{equation*}
\begin{equation*}
\delta_{SP}(m+2\delta_{SP}) < 0 \ \text{if $\delta_{RT}<0$},
\end{equation*}
which are impossible for $\delta_{SP} < \delta_{RT} < -\delta_{SP}$ and $\delta_{SP}<0$.  Therefore, $\beta_s \delta_{SP0} \left(\delta_{SP0} + m\right) + m>0$, thus proving that $h_0(\beta_s)>0$ for all $(\delta_{SP},\delta_{RT}) \in \mathcal{Q}_{\text{\rom{2}}} \cup \mathcal{Q}_{\text{\rom{3}}}$ for $\delta_{RT}$ such that $\delta_{SP} < \delta_{RT} < -\delta_{SP}$.

To show that the stationary point of $h_{-1}(\beta)$ is always greater than zero, note that the limit of $h_{-1}(\beta)$'s slope as $\beta$ tends to $-\frac{2}{\delta_{SP}}$ is positive.  As by Table \ref{tab:h0h1sign} $h_{-1}(\beta)$ is positive at the beginning and negative at the end of its domain, if its slope and value are both positive at the start of its domain, then the stationary point must be a positive global maximum of $h_{-1}(\beta)$.  Thus, one proves Lemma \ref{lma:h0h1sps} in full.
\end{proof}

\subsection{The Roots of $h_0(\beta)$ and $h_{-1}(\beta)$}\label{ap:h0h1rts}
Having already examined $h_i(\beta)$'s stationary points in Appendix \ref{ap:h0h1sps}, we now turn to indirectly characterizing this function's roots.  The results of this characterization are summarized in Lemma \ref{lma:h0h1rts}.
\begin{lemma}[Roots of $h_i(\beta)$]\label{lma:h0h1rts}
The functions $h_0(\beta)$ and $h_{-1}(\beta)$ only have roots for $(\delta_{SP},\delta_{RT}) \in \mathcal{Q}_{\text{\rom{2}}}$.  For any $(\delta_{SP},\delta_{RT}) \in \mathcal{Q}_{\text{\rom{2}}}$,
\begin{itemize}
    \item If $\delta_{RT} > -\delta_{SP}$, then
    \begin{itemize}
        \item $h_{-1}(\beta)$ has no roots; and
        \item $h_0(\beta)$ has one root $\beta_{r,0} > -\frac{2}{\delta_{SP}}$ given implicitly by \[
W_0(-re) = 1 - \frac{\beta_{r,0} m}{2} + \frac{\sqrt{\beta_{r,0} m\left(\beta_{r,0} m - 4\right)}}{2};
\]
    \end{itemize}
    \item If $0 < \delta_{RT} < -\delta_{SP}$, then
    \begin{itemize}
        \item $h_{0}(\beta)$ has no roots; and
        \item $h_{0}(\beta)$ has one root given implicitly by \[
W_{-1}(-re) = 1 - \frac{\beta_{r,-1} m}{2} - \frac{\sqrt{\beta_{r,-1} m\left(\beta_{r,-1} m - 4\right)}}{2};
\] and
    \end{itemize}
    \item If $\delta_{RT} = -\delta_{SP}$ or $\delta_{RT} \leq 0$, then $h_{0}(\beta)$ and $h_{-1}(\beta)$ have no roots.
\end{itemize}
\end{lemma}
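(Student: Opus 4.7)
The plan is to count the roots of $h_i(\beta)$ in each parameter sub-regime by combining the sign of $h_i$ at the two endpoints of its domain with the stationary-point counts already established in Lemma \ref{lma:h0h1sps}, and then to derive the implicit formulas for any existing root by solving a quadratic in $W_i(-re)$. My starting point is the closed form
\[
h_i(\beta) = W_i^2(-re) + (\beta m - 2) W_i(-re) + 1,
\]
which one can verify by differentiating with $\tfrac{d}{d\beta}W_i(-re) = \tfrac{\delta_{SP} W_i(-re)}{1+W_i(-re)}$ and recovering the expression for $dh_i/d\beta$ given in Lemma \ref{lma:h0h1sps}. The equation $h_i(\beta_r) = 0$ is then a quadratic in $W_i(-re)$ with roots $1 - \tfrac{\beta_r m}{2} \pm \tfrac{\sqrt{\beta_r m(\beta_r m - 4)}}{2}$; since $W_0(-re) \in (-1,0)$ and $W_{-1}(-re) \in (-\infty,-1)$ on the relevant domain, the $+$ root is precisely the value attainable by $W_0$ and the $-$ root is precisely the value attainable by $W_{-1}$, which pins down the implicit expressions claimed in the lemma.

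Next I would compute the boundary values of $h_i$. At $\beta = -2/\delta_{SP}$ one has $-re = -1/e$ and $W_0(-1/e) = W_{-1}(-1/e) = -1$, from which a direct substitution yields $h_i(-2/\delta_{SP}) = 2(\delta_{SP}+\delta_{RT})/\delta_{SP}$; recalling $\delta_{SP}<0$ in $\mathcal{Q}_{\text{\rom{2}}} \cup \mathcal{Q}_{\text{\rom{3}}}$, this is negative precisely when $\delta_{RT} > -\delta_{SP}$ and positive when $\delta_{RT} < -\delta_{SP}$. In the limit $\beta \to \infty$ the leading-order asymptotics $W_0(-re)\sim -re$ and $W_{-1}(-re)\sim \beta\delta_{SP}$ give $h_0(\beta)\to 1$ and $h_{-1}(\beta) \sim \beta^2 \delta_{SP}\delta_{RT}+1$, whose signs are fully determined by the sub-regime. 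A case-by-case bookkeeping then closes the argument. For $\mathcal{Q}_{\text{\rom{2}}}$ with $\delta_{RT}>-\delta_{SP}$: $h_0$ has opposite-sign endpoints and, by Lemma \ref{lma:h0h1sps}, no stationary points, giving exactly one root $\beta_{r,0}$, while $h_{-1}$ has same-sign endpoints and no stationary points, giving zero roots. For $0 < \delta_{RT} < -\delta_{SP}$ the conclusion is dual: $h_0$ has two positive endpoints and a single strictly positive stationary point (the positivity is the content of Lemma \ref{lma:h0h1sps}), so it never vanishes, while $h_{-1}$ has a positive left endpoint, a negative limit at infinity, and a single positive local maximum, yielding exactly one root $\beta_{r,-1}$. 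In the remaining sub-cases ($\delta_{RT}\leq 0$ or $\delta_{RT}=-\delta_{SP}$, covering all of $\mathcal{Q}_{\text{\rom{3}}}$ together with the rest of $\mathcal{Q}_{\text{\rom{2}}}$), both endpoints share a sign and the stationary-point structure prohibits a zero crossing, so no roots exist.

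The main obstacle I anticipate is establishing, in the sub-regime $0<\delta_{RT}<-\delta_{SP}$, the analogue for $h_{-1}$ of the positivity bound that Lemma \ref{lma:h0h1sps} already delivers for $h_0$: I need the unique local maximum of $h_{-1}$ to be strictly positive so that the unimodal shape produces exactly one interior sign change, and this requires chasing the algebraic identity that characterizes $\gamma(\beta)$ at the stationary point and bounding it in the spirit of the argument given for $h_0$. The range bound $\beta_{r,0} > -2/\delta_{SP}$ is then immediate because $h_0(-2/\delta_{SP})$ is strictly negative and $h_0$ is continuous, so its unique zero must lie strictly to the right. The boundary case $\delta_{RT}=-\delta_{SP}$ warrants a short separate remark: there $h_i(-2/\delta_{SP})=0$, but this is a degenerate endpoint rather than an interior root, and its role as the bifurcation value is already handled by the special argument in Theorem \ref{thm:NoOfFPs}.
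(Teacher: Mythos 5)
Your proposal is correct and takes essentially the same route as the paper's proof: write $h_i$ as a quadratic in $W_i(-re)$, match the two quadratic roots to the ranges of the $W_0$ and $W_{-1}$ branches to obtain the implicit root expressions, and then count roots case by case from the endpoint signs (which the paper reads off its table and you compute directly) combined with the stationary-point facts of Lemma \ref{lma:h0h1sps}. The positivity of $h_{-1}$'s maximum that you flag as the main remaining obstacle is in fact automatic, since with a positive left endpoint value and a single stationary point any local maximum already exceeds that endpoint value, so no additional argument is needed there.
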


\begin{proof}
As the function $h_i(\beta)$ is quadratic in $W_i(-re)$, $\beta_r$ is given implicitly by
\begin{equation*}
W_i(-re) = 1 - \frac{\beta_r m}{2} \pm \frac{\sqrt{\beta_r m\left(\beta_r m - 4\right)}}{2}.
\end{equation*}

Note that for all $\beta_r > \frac{4}{m}$,
\begin{align*}
1 - \frac{\beta_r m}{2} + \frac{\sqrt{\beta_r m\left(\beta_r m - 4\right)}}{2} &\in (-1,0), \\
1 - \frac{\beta_r m}{2} - \frac{\sqrt{\beta_r m\left(\beta_r m - 4\right)}}{2} &\in (-\infty,-1).
\end{align*}

Due to these expressions' ranges and as $W_0(-re)\in(-1,0)$ and $W_{-1}(-re)\in(-\infty,-1)$, $h_0(\beta)$'s only possible solution is
\begin{equation*}
W_0(-re) = 1 - \frac{\beta_{r,0} m}{2} + \frac{\sqrt{\beta_{r,0} m\left(\beta_{r,0} m - 4\right)}}{2},
\end{equation*}
and $h_{-1}(\beta)$'s only possible solution is
\begin{equation*}
W_{-1}(-re) = 1 - \frac{\beta_{r,-1} m}{2} - \frac{\sqrt{\beta_{r,-1} m\left(\beta_{r,-1} m - 4\right)}}{2},
\end{equation*}
where $\beta_{r,0},\beta_{r,-1} > \frac{4}{m}$.

Note that the existence of these implicit expressions for the roots of $h_0(\beta)$ and $h_{-1}(\beta)$ imply that these functions can have at most one root.  Knowing this result, the number of stationary points from Lemma \ref{lma:h0h1sps}, and the signs of $h_0(\beta)$ and $h_{-1}(\beta)$ at the extremes of their domains from Table \ref{tab:h0h1sign}, one can prove when $h_0(\beta)$ and $h_{-1}(\beta)$ have a root.

For $\delta_{RT}> -\delta_{SP}$, neither $h_0(\beta)$ nor $h_{-1}(\beta)$ have any stationary points by Lemma \ref{lma:h0h1sps}. As by Table \ref{tab:h0h1sign} $h_{-1}(\beta)$ is negative at both ends of its domain, it is therefore negative over its entire domain. As $h_0(\beta)$ transitions from negative values at the beginning of its domain to positive values the end of its domain, $h_0(\beta)$ must cross zero exactly once at $\beta_{r,0}$.  Therefore, $h_{0}(\beta)\leq0$ for $\beta \in (-\frac{2}{\delta_{SP}}, \beta_{r,0}]$ and $h_{0}(\beta)>0$ for $\beta \in (\beta_{r,0},\infty)$.

At $\delta_{RT} = -\delta_{SP}$, neither $h_0(\beta)$ nor $h_{-1}(\beta)$ have stationary points by Lemma \ref{lma:h0h1sps}.  Furthermore, these functions equal zero as beta tends to $ -\frac{2}{\delta_{SP}}$.  As $h_0(\beta)>0$ and $h_{-1}(\beta)<0$ for all $\beta > -\frac{2}{\delta_{SP}}$ by Table \ref{tab:h0h1sign}, these functions have no roots when $\delta_{RT} = -\delta_{SP}$.

For $0 < \delta_{RT} < -\delta_{SP}$, only $h_{-1}(\beta)$ has a stationary point by Lemma \ref{lma:h0h1sps}.  As $h_0(\beta)$ is greater than zero at both ends of its domain by Table \ref{tab:h0h1sign}, $h_0(\beta) > 0$ over its entire domain.  As $h_{-1}(\beta)$ can have at most one root and as $h_{-1}(\beta)$ transitions from positive values at the beginning of its domain to negative values the end of its domain, $h_{-1}(\beta)$ must cross zero exactly once at $\beta_{r,-1}$.  Therefore, $h_{-1}(\beta)\geq0$ for $\beta \in (-\frac{2}{\delta_{SP}}, \beta_{r,-1}]$ and $h_{-1}(\beta)<0$ for $\beta \in (\beta_{r,-1},\infty)$.

For $\delta_{RT} \leq 0$, as $h_{-1}(\beta)$ has no stationary points by Lemma \ref{lma:h0h1sps} and as $h_{-1}(\beta)>0$ at both ends of its domain, $h_{-1}(\beta)$ has no roots for $\delta_{RT} \leq 0$.  Although $h_0(\beta)$ has a stationary point, its stationary point is positive by Lemma \ref{lma:h0h1sps}.  As $h_0(\beta) >0$ at the extremes of its domain, $h_0(\beta)$ has no roots for any $\delta_{RT} \leq 0$.

Therefore, proves Lemma \ref{lma:h0h1rts}, where
\begin{itemize}
\item For $\delta_{RT}> -\delta_{SP}$, $h_0(\beta)$ has one root and $h_{-1}(\beta)$ has none;
\item For $\delta_{RT} = -\delta_{SP}$, neither $h_0(\beta)$ nor $h_{-1}(\beta)$ have any roots;
\item For $0<\delta_{RT} < -\delta_{SP}$, $h_{-1}(\beta)$ has one root and $h_{0}(\beta)$ has none; and
\item For $\delta_{RT} \leq 0$, neither $h_0(\beta)$ nor $h_{-1}(\beta)$ have any roots.
\end{itemize}
\end{proof}

\subsection{Monotonicity of $W_i(-e^{\beta \delta_{SP} + 1})$}\label{ap:MonoAndAsympW0W1}
The following Lemma is necessary to prove the existence and number of stationary points of $h_i(\beta)$. 
\begin{lemma}[Monotonicity of $W_i(-e^{\beta \delta_{SP} + 1})$]\label{lma:MonoAndAsympW0W1}
For $(\delta_{SP},\delta_{RT})\in\mathcal{Q}_{\text{\rom{2}}}\cup \mathcal{Q}_{\text{\rom{3}}}$ and for all $\beta > -\frac{2}{\delta_{SP}}$,
\begin{itemize}
    \item $W_0(-e^{\beta \delta_{SP} +1})$ is strictly increasing; and
    \item $W_{-1}(-e^{\beta \delta_{SP} +1})$ is strictly decreasing.
\end{itemize}
\end{lemma}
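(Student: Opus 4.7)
The plan is to reduce the claim to two elementary facts: strict monotonicity of the inner exponential argument in $\beta$, and strict monotonicity of each branch of $W$ on the interval $(-1/e,0)$, and then compose them via the chain rule.

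First I would fix notation by setting $z(\beta) := -e^{\beta \delta_{SP} + 1}$. Under the hypothesis $(\delta_{SP},\delta_{RT})\in \mathcal{Q}_{\text{\rom{2}}}\cup \mathcal{Q}_{\text{\rom{3}}}$ together with $\beta > -2/\delta_{SP}$, the threshold $-2/\delta_{SP}$ is well-defined and positive, which forces $\delta_{SP} < 0$; and the inequality on $\beta$ is equivalent to $\beta \delta_{SP} + 1 < -1$, i.e.\ $z(\beta) \in (-1/e, 0)$. This is precisely the common domain on which both principal and $-1$-branches of Lambert $W$ are defined and real-valued. A direct computation gives $z'(\beta) = -\delta_{SP}\, e^{\beta \delta_{SP} + 1} > 0$, so $z$ is strictly increasing as a function of $\beta$.

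Next I would record the monotonicity of each branch on $(-1/e,0)$ by implicit differentiation of the defining identity $W_i(z)\,e^{W_i(z)} = z$, which yields
\begin{equation*}
W_i'(z) \;=\; \frac{W_i(z)}{z\,\bigl(1 + W_i(z)\bigr)}.
\end{equation*}
For $i=0$ we have $W_0(z)\in(-1,0)$ on $(-1/e,0)$, so the numerator and $z$ share a sign while $1+W_0(z)>0$, giving $W_0'(z) > 0$. For $i=-1$ we have $W_{-1}(z) < -1$, so the numerator and $z$ again share a sign but $1 + W_{-1}(z) < 0$, giving $W_{-1}'(z) < 0$. Thus $W_0$ is strictly increasing and $W_{-1}$ is strictly decreasing on $(-1/e,0)$.

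Finally, the chain rule gives $\frac{d}{d\beta} W_i(z(\beta)) = W_i'(z(\beta))\, z'(\beta)$. Since $z'(\beta) > 0$, the sign of this derivative matches that of $W_i'(z(\beta))$: strictly positive for $i=0$ and strictly negative for $i=-1$, which is exactly the claim. There is no real obstacle; the only point requiring care is verifying that the hypothesis $\beta > -2/\delta_{SP}$ keeps $z(\beta)$ strictly inside $(-1/e,0)$ so that both branches are simultaneously in their real domain and the sign analysis of $1 + W_i$ is unambiguous.
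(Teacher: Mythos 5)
Your proposal is correct and follows essentially the same route as the paper: the paper likewise differentiates $W_i(-e^{\beta\delta_{SP}+1})$ via the chain rule (obtaining $\delta_{SP}\bigl(1+\tfrac{1}{W_i}\bigr)^{-1}$, equivalent to your $W_i'(z)z'(\beta)$) and reads off the signs from the branch ranges $W_0\in(-1,0)$ and $W_{-1}\in(-\infty,-1)$. Your only addition is the explicit check that $\beta>-\tfrac{2}{\delta_{SP}}$ places the argument in $(-1/e,0)$, a domain verification the paper leaves implicit.
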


\begin{proof}
The ranges of $W_0(z)\in(-1,0)$ and $W_{-1}(z)\in(-\infty,-1)$ for some $z\in\mathbb{R}$ follow from \cite{corless1996lambert}.  Taking the derivative of $W_i(-e^{\beta \delta_{SP} +1})$,
\begin{equation}\label{eq:dWi_dbeta}
W'_i(-re) = \frac{-\delta_{SP}reW_i(-re)}{-re(1+W_i(-re))} = \delta_{SP} \frac{1}{1 + \frac{1}{W_i(-re)}}
\end{equation}

Given the ranges $W_0(z)\in(-1,0)$ and $W_{-1}(z)\in(-\infty,-1)$, one finds that $W_0(-re)$ strictly increases and $W_{-1}(-re)$ strictly decreases over the interval $\beta \in (-\frac{2}{\delta_{SP}},\infty)$.
\end{proof}

\subsection{Condition for the Fixed Point Stability}\label{ap:FPStabCond}
The following Lemma demonstrates how the stability of logit dynamics fixed points are intrinsically linked to the $r$-Lambert function.
\begin{lemma}[Condition for Fixed Point Stability]\label{lma:FPStabCond}
A fixed point $x^*=\frac{1}{k}W_r(kr)$ is stable for $k \neq 0$ if
\begin{equation*}
\frac{1}{k}W_r^2(kr) -W_r(kr) - 1 \leq 0.
\end{equation*}
\end{lemma}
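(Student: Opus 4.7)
The plan is to reduce the stability test to an algebraic condition on $W_r(kr)$ by exploiting the fixed-point identity $\sigma(\beta g(x^*)) = x^*$, where $\sigma(u) := e^u/(1+e^u)$ is the logistic sigmoid. Since $\dot{x}=f(x)$ is a scalar autonomous ODE, a fixed point $x^*$ is locally (asymptotically) stable whenever $f'(x^*) \leq 0$, so the task reduces to expressing $f'(x^*)$ in closed form in terms of the quantities $k$, $r$, and $W_r(kr)$ already used in \eqref{eq:logit_dyn_sol}.

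First I would differentiate $f(x) = \sigma(\beta g(x)) - x$ using the sigmoid identity $\sigma'(u) = \sigma(u)(1-\sigma(u))$ together with $g'(x) = \delta_{RT} - \delta_{SP} = m$, which yields $f'(x) = \beta m\, \sigma(\beta g(x))(1-\sigma(\beta g(x))) - 1$. Evaluation at a fixed point is then immediate: because $\sigma(\beta g(x^*)) = x^*$ by the very definition of $x^*$, the factor $\sigma(\beta g(x^*))(1-\sigma(\beta g(x^*)))$ collapses to $x^*(1-x^*)$ without having to reintroduce any exponentials. Combined with $\beta m = -k$, this produces the compact expression
\[
f'(x^*) = -k\, x^*(1-x^*) - 1,
\]
so the stability condition $f'(x^*) \leq 0$ is equivalent to $k\,x^*(1-x^*) + 1 \geq 0$.

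The last step is to substitute the $r$-Lambert representation $x^* = W_r(kr)/k$ from Section \ref{sec:r-LamForm}, which is well-defined because $k \neq 0$ by hypothesis. Direct expansion gives $k\,x^*(1-x^*) = W_r(kr) - W_r^2(kr)/k$, and rearranging the resulting inequality $W_r(kr) - W_r^2(kr)/k + 1 \geq 0$ produces $\frac{1}{k}W_r^2(kr) - W_r(kr) - 1 \leq 0$, which is exactly \eqref{eq:genFPstab}. I do not anticipate a substantive obstacle here beyond careful bookkeeping; the one step worth flagging is to invoke the sigmoid-derivative identity \emph{before} passing to the $r$-Lambert form, so that the fixed-point relation can be used in its cleanest guise $\sigma(\beta g(x^*)) = x^*$ to eliminate the exponential in a single stroke rather than through the more cumbersome $e^{\beta g(x^*)} = x^*/(1-x^*)$.
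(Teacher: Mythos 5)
Your proposal is correct and takes essentially the same route as the paper: both arguments linearize the scalar dynamics at $x^*$ and reduce $f'(x^*)$ to the expression $\frac{1}{k}W_r^2(kr)-W_r(kr)-1$, whose nonpositivity is the stated condition. Your intermediate step $f'(x^*)=-k\,x^*(1-x^*)-1$, obtained from the fixed-point identity $\sigma(\beta g(x^*))=x^*$ before substituting $x^*=\frac{1}{k}W_r(kr)$, is only a cosmetically cleaner rearrangement of the paper's use of the defining relation $W_r(kr)e^{W_r(kr)}+rW_r(kr)=kr$ to eliminate the exponential.
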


\begin{proof}
The derivative $\left. \frac{df}{dx} \right|_{x^*}$ in terms of $k=-\beta m$, $m=\delta_{RT} - \delta_{SP}$, and $r = e^{\beta \delta_{SP}}$ is

\begin{equation*}
    \left. \frac{df}{dx} \right|_{x^*} = \frac{-k e^{W_r(kr)}r^{-1}}{\left(1 + e^{W_r(kr)}r^{-1}\right)^{2}} - 1.
\end{equation*}

By the definition of the $r$-Lambert function, one knows that $W_r(kr)e^{W_r(kr)} + rW_r(kr)=kr$.  Therefore,
\begin{equation*}
    e^{W_r(kr)}r^{-1} = \frac{k}{W_r(kr)} - 1.
\end{equation*}
Substituting this expression into that of $\left. \frac{df}{dx} \right|_{x^*}$ and simplifying yields
\begin{equation*}
\left. \frac{df}{dx} \right|_{x^*} = \frac{1}{k}W_r^2(kr) -W_r(kr) - 1
\end{equation*}

Therefore, all fixed points $x^*=\frac{1}{k}W_r(kr)$ where $k\neq0$ are stable if
\begin{equation*}
    \frac{1}{k}W_r^2(kr) -W_r(kr) - 1 \leq 0.
\end{equation*}
\end{proof}


\end{document}